\newtheorem{problem}{Problem}
\newtheorem{assumption}{Assumption}
\newtheorem{lemma}{Lemma}
\newtheorem{definition}{Definition}
\newtheorem{example}{Example}
\newtheorem{proposition}{Proposition}
\newcommand{\PP}{\mathcal{P}}
\newcommand{\sig}{[\cdot]}
\newcommand{\RR}{\mathbb{R}}
\newcommand{\Rnn}{{\RR_{\geq 0}}}
\newcommand{\ZZ}{\mathbb{Z}}
\newcommand{\Znn}{{\ZZ}_{\geq 0}}
\newcommand{\NN}{\mathbb{N}}
\newcommand{\nin}{\not\in}
\newcommand{\proj}{{\pi}}
\newcommand{\pow}[1]{{2^{#1}}}
\newcommand{\aset}{\mathcal{A}}
\newcommand{\indep}{\mathsf{IND}}
\newcommand{\maps}{\rightarrow}
\newcommand{\disc}{{\eta}} 
\newcommand{\state}{\mathcal{X}}
\newcommand{\control}{\mathcal{U}}
\newcommand{\inv}{\mathcal{S}} % safe set
\newcommand{\ctrl}{C} % controller 
\newcommand{\block}{\mathcal{B}}
\newcommand{\inter}{I}
\newcommand{\safe}{\mathcal{K}} %invariant set
\newcommand{\scp}{\mathsf{SPRE}} % Controllable predecessor for safety
\newcommand{\cpi}{\mathsf{IPRE}} % Controllable predecessor with inputs 
\newcommand{\sipre}{\mathsf{SIPRE}} % indep safe predecessor
\newcommand{\spre}{\mathsf{SPRE}}
\newcommand{\pre}{\mathsf{PRE}}
    \pgfplotsset{
    standard/.style={%Axis format configuration
        axis x line=middle,
        axis y line=middle,
        enlarge x limits=0.15,
        enlarge y limits=0.15,
        every axis x label/.style={at={(current axis.right of origin)},anchor=north west},
        every axis y label/.style={at={(current axis.above origin)},anchor=north east},
        every axis plot post/.style={mark options={fill=white}}
        }
    }
\title{Automatic Generation of Communication Requirements for Enforcing Multi-Agent Safety\thanks{This work was supported in part by NSF grant CNS-1545116, co-funded by the DOT.}}
\author{Eric S. Kim \quad Murat Arcak \quad Sanjit A Seshia \institute{Department of Electrical Engineering and Computer Sciences\\UC Berkeley\\
Berkeley, CA}\email{\{eskim, arcak, sseshia\}@eecs.berkeley.edu} \and BaekGyu Kim \quad Shinichi Shiraishi 
\institute{Toyota InfoTechnology Center, U.S.A.\\ ~\\ 
Mountain View, CA} \email{\{bkim, sshiraishi\}@us.toyota-itc.com }}
\begin{document}

\maketitle
\begin{abstract}
Distributed controllers are often necessary for a multi-agent system to satisfy safety properties such as collision avoidance. Communication and coordination are key requirements in the implementation of a distributed control protocol, but maintaining an all-to-all communication topology is unreasonable and not always necessary.
Given a safety objective and a controller implementation, we consider the problem of identifying \textit{when} agents need to communicate with one another and coordinate their actions to satisfy the safety constraint.
We define a coordination-free controllable predecessor operator that is used to derive a subset of the state space that allows agents to act independently, without consulting other agents to double check that the action is safe. Applications are shown for identifying an upper bound on connection delays and a self-triggered coordination scheme. Examples are provided which showcase the potential for designers to visually interpret a system's ability to tolerate delays when initializing a network connection.
\end{abstract} 

\section{Introduction}

Interaction amongst agents can come in various forms such as coupled dynamics, coupling constraints, or a joint optimization objective.
A common facet of multi-agent systems is the use of a distributed control architecture, where each agent has authority over different sets of actuators, and an accompanying communication network for agents to coordinate their actions. 
Communication and collective decision making facilitate complex interactions amongst agents and enable them to reliably achieve collective behaviors that would otherwise be difficult to accomplish without some coordination protocol.

In this paper, we consider the problem of satisfying a safety objective with a controller that is distributed over multiple agents. We say that these agents are coordinating within a given time step if they communicate and collectively agree upon actions to execute.
As a motivating example, consider two fully autonomous vehicles equipped with vehicle-to-vehicle (V2V) communication and tasked with avoiding a collision. 
At one extreme are scenarios where no communication is necessary due to a sufficiently large distance between the vehicles, while at the other extreme are near miss scenarios where collisions are only avoided through precise timing, actuation, or luck. 
Preemptive cooperation enabled by V2V communication is designed to help the vehicles avoid these danger scenarios and for vehicles to negotiate collision-free trajectories. 

How can one distinguish between these extremes and determine when multi-system coordination is and is not necessary to maintain a safety objective?
We present a method that takes a closed loop control system and a safety requirement, then identifies a subset of the state space that is robustly safe against temporary communication losses. This subset naturally shrinks with time as the duration of the communication loss increases.
At its core, our method iterates an appropriate operator which propagates a coordination-free region and resembles  fixed point algorithms in the literature on symbolic system verification. This operator is defined such that it incorporates information about the system dynamics and the controller architecture. 
These results are first used to consider a scenario when multiple agents want to cooperate, but can only do so after some delay. We then develop a self-triggered coordination scheme where agents can preemptively schedule when they would like to communicate, while still maintaining safety guarantees.

This paper tackles a new problem that has not, to the best of our knowledge, been addressed within the control theory literature and is motivated by applications to autonomous vehicle safety. 
Compared to other work, we do not assume a decomposition of the state space as in \cite{dallal16}\cite{chen2016general} nor is the objective assumed to be decomposable \cite{chen2016general}. 
Instead we only consider a decomposition of the input space and can thus accommodate instances when there are complex coupling dynamics that are best handled monolithically. 
This work leverages compositional tools and techniques developed for formal controller synthesis. These may involve constructing abstractions compositionally \cite{rungger2016compositional}, decomposing the controller synthesis procedure \cite{kim2015compositional}\cite{meyer2017compositional}, or decomposing the controller itself \cite{sadraddini2017formal}.
Assume-guarantee reasoning has also been used for compositional synthesis with multiple agents by abstracting out internal information that is irrelevant to reason about system interactions \cite{nuzzo2015platform}. 
Our self-triggering communication scheme may be compared to similar schemes in the self-triggered control literature \cite{heemels2012introduction}, where often the objective is to minimize the energy expended by sensors and actuators subjected to a stability constraint \cite{brunner2015communication}\cite{GOMMANS201559}. Our work instead seeks to minimize the communication overhead incurred as multiple agents negotiate safe actions.

\section{Formulation}

\subsection{Notation}

Given two sets $\aset$ and $\mathcal{B}$, let $|\aset|$, $\pow{\aset}$, and $\aset \times \mathcal{B}$ respectively represent $\aset$'s cardinality, $\aset$'s power set (set of all subsets), and the Cartesian product between $\aset$ and $\mathcal{B}$. 
%In equations, the notation ``$:=$" is used to distinguish the assignment operation with the equality symbol ``=".
Let $\RR$, $\ZZ$ represent the real and integer numbers respectively, while $\Rnn$ and $\Znn = \NN$ are their non-negative counterparts.
With an appropriate universal set $\Omega$, $\aset$'s complement $\aset^C$ is defined as $\Omega \setminus \aset$.
Given a Cartesian product of $M$ sets $\prod_{i=1}^M \aset_i$ and a subset $L \subseteq \prod_{i=1}^M \aset_i$, the projection operation  $\proj_{\aset_j}: \prod_{i=1}^M \aset_i \maps \aset_j$ retains the coordinates associated with $\aset_j$ and is defined as: 
\begin{align}
\proj_{\aset_j}(L) &= \{a_j \in \aset_j: \exists  (a_1, \ldots, a_{j-1}, a_{j+1}, \ldots, a_M) \text{ such that } (a_1,\ldots, a_M) \in L \}. 
%\exists_{\aset_j} (L) &= \{ (a_1, \ldots, a_{j-1}, a_{j+1}, \ldots, a_M): \exists a_j \\
\end{align} 

\subsection{Signals and Systems}
An interval $[a,b]$ where $a,b \in \ZZ$ includes both end points. Let $[a,b) = [a,b-1]$ and $[a] = [a,a]$.
Given a space $\PP$, the space of trajectories evolving in $\PP$ is $\PP\sig$. 
A trajectory $p\sig$ over time interval $\inter$ is a map $p\sig: \inter \maps \PP$. 
Let $\state$ and $\control$ represent a system's state and input spaces respectively. Sets $\state\sig$ and $\control\sig$ are referred to as state and input trajectory sets.
This paper deals with systems where the input space $\control$  consists of $N$ components so that $\control = \prod_{i=1}^N \control_i$ \footnote{Some $\control_i$ may be multi-dimensional so $N$ is not necessarily the dimension of $\control$.}. 
Each of these $N$ components is thought of as an individual agent.
The system's discrete-time dynamics are given by a relation $f \subseteq \state \times \control \times \state$, which can also be viewed as a set-valued function $f: \state \times \control \maps \pow{\state}$. Let $\control(x) = \{u \in \control: f(x,u) \neq \emptyset\} $ denote the set of non-blocking control inputs at $x$. 

A memoryless controller for system $f$ is a relation $\ctrl \subseteq \state \times \control$. 
The set of states $\block = \{x \in \state: (x,u) \nin \ctrl \text{ for all } u \in \control \}$ is the set of blocking states under controller $\ctrl$. 
A controller may also be viewed as a function $\ctrl: \state \maps \pow{\control}$ that maps states to sets of admissible inputs (states with no corresponding control input map to an empty set).
A controller $\ctrl$ and system $f$ can be interconnected into a closed loop system denoted as $f \circ \ctrl: \state \maps \pow{\state}$ \footnote{This notation was inspired by $\circ$'s usage as a function composition operator.  However, it is not a composition in the strictest sense where $f(g(x)) = (f \circ g) (x)$. }. The next state $x[k+1]$ satisfies $x[k+1] \in f \circ \ctrl (x[k]))$ if and only if there exists a $u[k] \in \ctrl(x[k])$ such that $x[k+1] \in f(x[k],u[k])$. All sequences $x\sig$ that satisfy the aforementioned condition and $x[0] \in \mathcal{L}$ are said to be generated by the closed loop system $f \circ \ctrl$ with initial state set $\mathcal{L} \subseteq \state$.

\subsection{Control for Safety}
Safety is a common requirement for cyber-physical systems. We encapsulate this notion of safety as a region of the state space $\inv \subseteq \state$ that should never be exited. 
For a vehicle, set $\inv$ could represent a collision-free zone and a speed limit, while for a medical device $\inv$ could represent safe blood sugar levels. 

\begin{definition}
Let $\inv \subseteq \state$ be a set of safe states. A control policy $\ctrl: \state \maps \pow{\control}$ and initial set $\mathcal{L}\subseteq \inv$ is said to satisfy safety constraint $\inv$ if all trajectories generated by a closed loop system $f \circ \ctrl$ with any initial state $x[0] \in \mathcal{L}$ never exit $\inv$.
\end{definition}

At each state $x$, there is a set of admissible control inputs $\ctrl(x) \subseteq \control$.
A controller is deterministic if $|\ctrl(x)| = 1$ only permits one action for all $x \in \state$. Although determinism simplifies analysis of a closed loop system, deterministic controllers may be too restrictive if the system needs to satisfy additional requirements on top of safety.
For instance if two vehicles want to avoid a collision, then a safe controller can simply enforce that both vehicles have zero velocity but this prevents vehicles from reaching a desired location. 

\subsection{Loss of Safety Guarantees with a Distributed Controller}

More permissive controllers can act as supervisors that restrict control actions only enough to ensure safety. 
They are useful because they can be combined with other controllers that seek to achieve other objectives such as reaching a region.
When a  distributed controller is deployed on multiple systems without an underlying communication scheme, the non-determinism contained in permissive controllers can lead to safety violations.

If $\control = \prod_{i=1}^N \control_i$ is decomposed into $N$ inputs that are each under control from a different agent, then each must concurrently select a single input $u_i$ such that
\begin{align}
(u_1, \ldots, u_N) \in \ctrl(x). \label{eqnsatcontrol}
\end{align}
It is this step where multiple agents  concurrently select an input that leads to coordination hazards.
Whenever $|\ctrl(x)| > 1$ then assuring that (\ref{eqnsatcontrol}) holds is not always possible without explicit coordination and communication with other agents.

\begin{figure}
\centering
\includegraphics[width = .6\columnwidth]{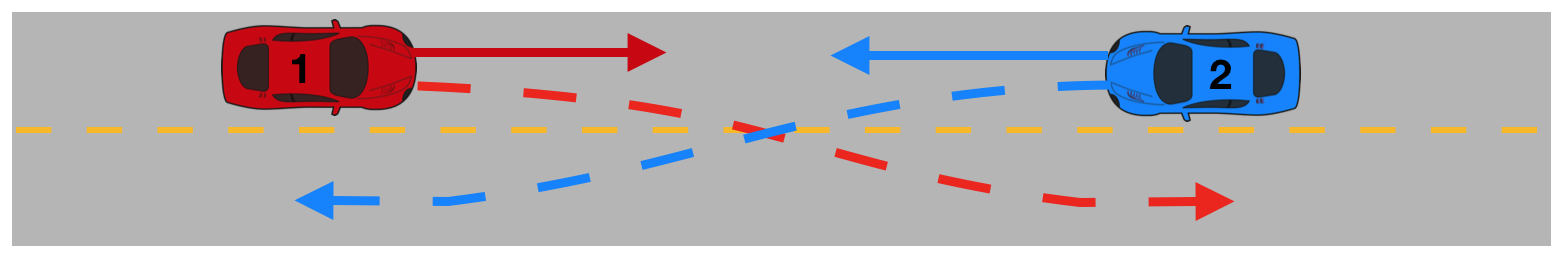}
\begin{small}
\begin{tabular}{|c|c|c|}
\hline & Right Vehicle  &  Right Vehicle \\
& Change & Stay\\
\hline Left Vehicle Change & Collision & No Collision \\
\hline Left Vehicle Stay & No Collision & Collision\\ \hline 
\end{tabular}
\end{small}
\caption{Motivating Example} \label{figcrashref}
\end{figure}

\begin{example}[Illustrative Example]
Consider a scenario depicted in \Cref{figcrashref} where two vehicles are facing one another and a collision is imminent. Both vehicles can choose between staying in their lane or switching to the other lane and a collision is avoided only when one vehicle switches. Clearly it is possible for a collision to be avoided as long as the two vehicles are able to communicate and negotiate which one changes lanes. On the other hand suppose that these vehicles are not equipped with V2V communications. If a collision does occur it is not possible to assign fault to solely one vehicle because from both vehicles' points of view its action was safe as long as the other vehicle responded with the appropriate action. Instead one can only attribute the fault to both agents' failure to negotiate.
\end{example}

To formalize the notion of coordination, we first define a minimal independent controller $\indep_\ctrl$ associated with $\ctrl$. The set of possible controller actions at $x$ is $\indep_\ctrl(x)$ and depicted in \Cref{figindepcontroller}.
\begin{align}
\indep_\ctrl(x) := \prod_{i=1}^N \proj_{\control_i} \ctrl(x).  \label{defindependentC}
\end{align}

\begin{figure}
\centering
\begin{tikzpicture}
\draw[->, thick] (0,0) -- (3.5,0);
\draw[->, thick] (0,0) -- (0,3.5);Sparsity-Aware Finite Abstraction
Gruber, Felix	Tech. Univ. of Munich
Kim, Eric S.	Univ. of California, Berkeley
Arcak, Murat	Univ. of California, Berkeley
%\draw[] (0,0) grid (3,3);

% Labels 
\node (u) at (1.5, 3.5) {$\control$};
\node (u1) at (3.5, .25) {$\control_1$};
\node (u2) at (.3, 3.5) {$\control_2$};

% Original Set
\tikzstyle{goodreg} = [thick, draw = black,pattern=soft crosshatch];
\node[circle, goodreg, minimum size=1cm] at (1.2,2.1) {}; 
\node[circle, goodreg, minimum size=.5cm] at (2.2,.8) {};

% Projections
\tikzstyle{projections} = [line width=2.6pt];
\draw[projections] (.7, 0) -- (1.7,0); %horizontal
\draw[projections] (1.95,0) -- (2.45,0);
\node (p1) at (1.5, -.6) {$\proj_{\control_1} (C(x))$};
\draw[->] (p1) -- (1.2,-.1);
\draw[->] (p1) -- (2.2,-.1);

\draw[projections] (0,1.6) -- (0,2.6); %vertical 
\draw[projections] (0,0.55) -- (0,1.05);
\node (p2) at (-.8, 1.5) {$\proj_{\control_2} (C(x))$};
\draw[->] (p2) -- (-.1,2.1);
\draw[->] (p2) -- (-.1,.8);

% Independent Set
\tikzstyle{indep} = [fill = black, fill opacity = .2];
\draw[indep] (.7,.55) rectangle (1.7, 1.05); % lower left
\draw[indep] (.7,1.6) rectangle (1.7, 2.6); % upper left
\draw[indep] (1.95,1.6) rectangle (2.45, 2.6); % upper right
\draw[indep] (1.95,.55) rectangle (2.45, 1.05); % lower right
\end{tikzpicture}
\caption{For some fixed $x \in \state$, the original safe control set $\ctrl(x)$ (patterned region) is projected onto the axes and yields $\proj_{\control_1}(C(x))$ and $\proj_{\control_2}(C(x))$ (thick lines). Combining the projections gives the coordination-free counterpart $\indep_\ctrl (x)$ (darker regions) defined in \Cref{seccoordregion}.} \label{figindepcontroller}
\end{figure}
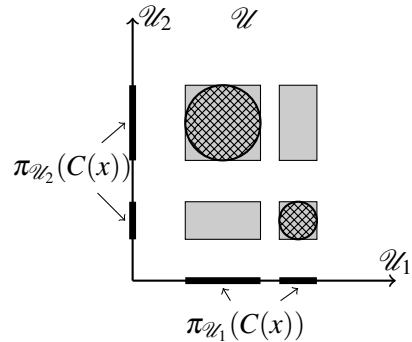

The projection $\proj_{\control_i} C(x)$ of this controller onto each agent $i$'s individual component $\control_i$ yields the set of all control inputs permitted at state $x$ without any information about how other agents behave. 
Any input $u_i \nin \proj_{\control_i} C(x)$ indicates that agent $i$ is either reckless or malicious. 
If all agents pick a $u_i \in  \proj_{\control_i} \ctrl(x)$ then they have all reasonably attempted to satisfy the safety condition by selecting a point $(u_1, \ldots, u_N) \in \indep_\ctrl(x)$, but the joint condition $(u_1, \ldots, u_N) \in \ctrl(x)$ is not necessarily satisfied because $\ctrl(x) \subseteq \indep_\ctrl(x)$.
The independent controller $\indep_\ctrl$ may also be viewed as the set of possible control actions that are reasonable in the undesirable situation where each agent believes itself to be the leader and relies on the other agents to be followers that respond to the leader's choice. 
The set $\indep_\ctrl(x) \subseteq \control $ is the minimal independent set that contains $\ctrl(x)$.

Throughout the rest of this paper, we analyze properties of the new closed loop system $f \circ \indep_\ctrl$, which is derived from $f \circ \ctrl$ but exhibits additional behaviors due to the absence of coordination.

Note that the set of trajectories that are exhibited under $f \circ \ctrl$ is a subset of those exhibited under $f \circ \indep_\ctrl$. Thus, even though the original system $f \circ \ctrl$ may be safe, $f \circ \indep_\ctrl$ may exhibit unsafe trajectories. 

\begin{problem}
Given a set of dynamics $f$, a distributed controller $\indep_\ctrl$, a safe region $\inv$, and coordination-free interval $I = [a,b)$ identify a subset of the state space $\mathcal{L}$ such that all behaviors of $f \circ \indep_\ctrl$ with initial state $x[a] \in \mathcal{L}$ remain in $\inv$ within the interval $I$.
\end{problem}

\subsection{Remarks on Coordination with Mesh Networks}

V2V technology also enables the creation of ad hoc vehicular mesh networks which enables applications in cooperative cruise control, vehicular platoons, and congestion mitigation. 
Suppose each agent is represented by a vertex in an undirected graph and two agents with a V2V have their corresponding vertices connected by an edge. Such a graph can be grouped into equivalence classes corresponding to its connected components. We assume that agents in the same class can communicate instantly even if they are separated by more than one edge. 

\begin{assumption} \label{assumEquivClass}
Each agent in an equivalence class can coordinate with all other agents in that class within each time step $k$.
\end{assumption} 
In practice, \Cref{assumEquivClass} is a requirement that the time scale over which messages is passed in the network are effectively instantaneous relative to the time scale of the physical dynamics. 
The independence definition of \Cref{defindependentC} was stated under the assumption that each $\control_i$ corresponded to one agent and that no agents cooperate. If agent cooperation occurs over a mesh network with $P$ connected components, then the independence condition corresponds to the connected components of the graph. For each of $l = 1,\ldots, P$ equivalence classes, let $\hat{\control}_l$ be the Cartesian product of the coordinates $\control_i$ that belong to that class. 
\begin{align}
\indep_\ctrl(x) := \prod_{l=1}^P \proj_{\hat{\control}_l} \ctrl(x).  \label{defmeshindependentC}
\end{align}
This formulation allows for a platoon to be treated as a single agent instead of a collection of vehicles. For notational simplicity, we simply assume that the decomposition into equivalence classes is given and use \Cref{defindependentC} throughout the rest of this paper.

\section{Coordination-Free Operator} \label{seccoordregion}

Given some controller $\ctrl \subseteq \state \times \control$, we use the associated minimally restrictive independent controller from \Cref{defindependentC} as a formal characterization of all the possible actions with a distributed implementation of $\ctrl$ in the absence of coordination.

The set of predecessor states which enforce membership within a region $Z \subseteq \state$ without coordination is computed with the operator
\begin{align}
\cpi (Z) =& \left\{ x: x \in \proj_\state(\indep_\ctrl) \right\} \cap \left\{x: \emptyset \neq f(x,u) \subseteq Z \text{ for all } u \in \indep_\ctrl (x) \right\}. \label{eqnuncoordsafe} 
\end{align} 
The first set ensures that there is always a valid input because $\proj_\state(\indep_\ctrl)$ is a state domain over which the controller produces admissible inputs. The second set takes into account the system dynamics and ensures that all states are in $Z$. A state in $\cpi(Z)$ is robust in the sense that all future possible next states $f(x,u)$ are contained in $Z$ \textit{despite} uncertainty about which $u \in \indep_\ctrl (x)$ is chosen.

Operator $\sipre_\inv$ below identifies states that can stay in $Z$ and remain safely in $\inv$ without coordination
\begin{equation}
\sipre_\inv(Z) = Z \cap \cpi(Z) \cap \inv. \label{eqnsipre}
\end{equation}
By iterating this operator $k$ times, we can identify a region of the state space that remains in $\inv$ for $k$ time steps despite communication losses. Both operators are simple modifications on standard controllable predecessor operators \cite{tabuada2009verification}. 

\subsection{Remarks about Algorithmic Implementation}

Set intersection, union, negation, and projection are the main operations that are required to compute \Cref{eqnuncoordsafe} and \Cref{eqnsipre} exactly.
In a continuous domain, support for these algebraic operations may only be possible to encode for a specific set of system dynamics and constraints (consider for instance linear system dynamics and constraints given as unions of polyhedra). However in the scenario where state and inputs spaces are finite, binary decision diagrams (BDDs)\cite{bryant1992symbolic} are an efficient data structure that supports all of the aforementioned operations. Instead of imposing constraints on the system dynamics and safety region, we opt for the finite case by using a grid to approximate a continuous domain.
Moreover, there exists a rich theoretical literature of abstraction methods \cite{tabuada2009verification} \cite{reissig2017feedback} and accompanying software tools such as \cite{rungger2016scots} which construct approximately similar finite systems such that \Cref{assumfinite} is satisfied, even if the state and input spaces of system $f$ are dense, continuous subsets of Euclidean space.
\begin{assumption} \label{assumfinite}
Both $\state$ and $\control$ are finite sets.
\end{assumption}

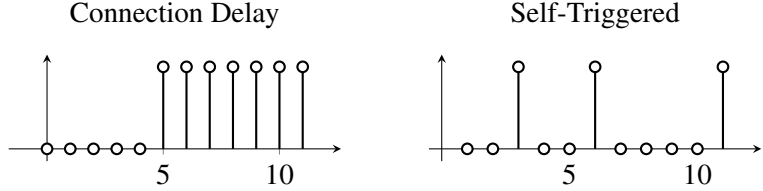
\begin{figure}
\centering
\begin{tikzpicture}
\begin{axis}[%
                standard,
                domain = 0:15,
                samples = 16,
                xlabel={},
                title={Connection Delay},
                yticklabels={,,},
                height = 3cm,
                width = 6cm,
                ymajorticks=false]
				                
                %\addplot+[ycomb,black,thick] {};
                \addplot+[ycomb, black, thick] plot coordinates { (0,0) (1,0) (2,0) (3,0) (4,0) (5,1) (6,1) (7,1) (8,1) (9,1) (10,1) (11,1)};
\end{axis}
\end{tikzpicture}
~~~~~~~~~~
\begin{tikzpicture}
\begin{axis}[%
                standard,
                domain = 0:15,
                samples = 16,
                xlabel={},
                title ={Self-Triggered},
                yticklabels={,,},
                height = 3cm,
                width = 6cm,
                ymajorticks=false]
				                
                %\addplot+[ycomb,black,thick] {};
                \addplot+[ycomb, black, thick] plot coordinates { (1,0) (2,0) (3,1) (4,0) (5,0) (6,1) (7,0) (8,0) (9,0) (10,0) (11,1)};
\end{axis}

\end{tikzpicture}

\caption{Two scenarios with intermittent connections. A high value signifies an established connection.} \label{figconnections}
\end{figure}

\section{Applications}

We consider two applications. One is to characterize latency requirements for a wireless communication system and the other is a design for a self-triggered coordination scheme.

\subsection{Maximum Allowed Connection Delay} \label{ssmaxdelay}

Our first application involves $N$ agents that seek to establish a wireless communication channel subject to a maximum connection delay  $D \in \mathbb{N}$. Once a connection is established, it is assumed to be maintained as in the left of \Cref{figconnections} where $D = 5$. If all agents attempt to initiate a connection starting at time $k$, then they are able to jointly choose a control input starting at time $k + D$.
\begin{definition}  
A system in state $x[k]$ at time $k$ is robustly safe to connection initialization delays of length $D$ if $x[k,\infty ) \in \inv$ for all trajectories $x[k,\infty )$ generated by the time varying closed loop system 
\begin{align}
x[k+1] &\in  f \circ \indep_\ctrl(x[k]) \text{ if } k \in [k,k+D) \\
x[k+1] &\in  f \circ \ctrl(x[k]) \text{ if } k \in [k+D, \infty)
\end{align} 
where we adopt the convention $[k, k+D) = \emptyset$ if $D = 0$. 
\end{definition}

The approach to generating the set of states that are robust to connection initialization delays of length $D$ is as follows. We first identify an invariance set $\safe$ where the system $f \circ \ctrl$ remains in $\inv$ along an infinite horizon $[k+D,\infty)$ once $x[k+D] \in \safe$. Invariance set $\safe$ is distinct from safe set $\inv$ because a state $x[k] \in \inv \setminus \safe$ satisfies the safety condition at time $k$ but is not guaranteed to do so along an infinite horizon. With set $\safe$, we then iterate $\sipre_\inv(\safe)$ $D$ times to identify the states that are guaranteed to reach $\safe$ at time $k+D$ without exiting $\inv$ within $[k, k+D)$. 

To identify $\safe$, we define operators that are analogous to $\cpi$ and $\sipre$, except that $\indep_\ctrl$ is replaced with $\ctrl$ 
\begin{align}
\pre(Z)=& \{x: x \in \proj_\state(\ctrl)\} \cap \: \{ x: \emptyset \neq f(x,u) \subseteq Z \text{ for all } u \in C(x)\} \\
\spre_\inv(Z) =& Z \cap \pre(Z) \cap \inv
\end{align}

\begin{lemma}
Let $\safe := \lim_{i \rightarrow \infty}  \spre_\inv^i(\state)$. Then all trajectories $x[k + D, \infty)$ such that $x[k+D] \in \safe$ will never intersect the unsafe set $\inv^C$.
\end{lemma}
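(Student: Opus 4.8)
The plan is to show that $\safe$ is a robust controlled invariant subset of $\inv$ for the coordinated closed loop $f \circ \ctrl$, from which the trajectory claim follows by a straightforward induction. The central observation is that $\spre_\inv$ is monotone with respect to set inclusion: if $Z_1 \subseteq Z_2$, then enlarging the target only makes the containment requirement $f(x,u) \subseteq Z$ easier to meet, so $\pre(Z_1) \subseteq \pre(Z_2)$ and hence $\spre_\inv(Z_1) \subseteq \spre_\inv(Z_2)$. Moreover $\spre_\inv(Z) \subseteq Z$ for every $Z$, since the definition intersects with $Z$. Applying both facts to the iteration started at $\state$ yields a decreasing chain $\state \supseteq \spre_\inv(\state) \supseteq \spre_\inv^2(\state) \supseteq \cdots$.

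By \Cref{assumfinite} the state space is finite, so $\pow{\state}$ is a finite lattice and this decreasing chain must stabilize after finitely many steps. Consequently the limit $\safe$ is attained exactly and satisfies the fixed point equation $\safe = \spre_\inv(\safe) = \safe \cap \pre(\safe) \cap \inv$. Reading off this equality gives three facts that I will use: (i) $\safe \subseteq \inv$; (ii) $\safe \subseteq \proj_\state(\ctrl)$, so $\ctrl(x) \neq \emptyset$ at every $x \in \safe$; and (iii) for every $x \in \safe$ and every $u \in \ctrl(x)$ one has $\emptyset \neq f(x,u) \subseteq \safe$.

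With these in hand, I would argue by induction on time along any trajectory $x[k+D, \infty)$ generated by $f \circ \ctrl$ with $x[k+D] \in \safe$. The base case $x[k+D] \in \safe \subseteq \inv$ is (i). For the inductive step, suppose $x[j] \in \safe$ for some $j \geq k+D$. By the definition of the closed loop, $x[j+1] \in f(x[j], u[j])$ for some admissible $u[j] \in \ctrl(x[j])$, and such a $u[j]$ exists by (ii); property (iii) then guarantees $f(x[j], u[j]) \subseteq \safe$, so $x[j+1] \in \safe$. Hence every state visited from time $k+D$ onward lies in $\safe$, and by (i) in $\inv$, so the trajectory never meets $\inv^C$.

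The only delicate point is justifying that the limit defining $\safe$ exists and is genuinely a fixed point rather than merely an infimum of the chain; this is where finiteness (\Cref{assumfinite}) does the work, collapsing the decreasing chain to a value reached in finitely many iterations. The invariance induction itself is routine once the fixed point properties (i)--(iii) are available, with the nonblocking clause $\emptyset \neq f(x,u)$ in (iii) ensuring that the trajectory is always extendable so that the infinite-horizon statement is well posed.
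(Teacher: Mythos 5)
Your proposal is correct, and it rests on the same two pillars as the paper's proof: monotonicity of $\spre_\inv$ together with finiteness of $\state$ (\Cref{assumfinite}) to guarantee that the iteration stabilizes, followed by an induction. The one organizational difference is in how the induction is run. The paper inducts on the iteration index, arguing that membership in $\spre_\inv^{i}(\state)$ guarantees safety of $f\circ\ctrl$ over a window of length $i$, and then passes to the limit to obtain the infinite-horizon claim (citing Tarski's theorem for existence of the limit). You instead observe that, in a finite lattice, the decreasing chain is attained exactly, so $\safe$ satisfies the fixed point equation $\safe = \safe \cap \pre(\safe) \cap \inv$; reading off $\safe \subseteq \inv$, nonblockingness, and $f(x,u)\subseteq\safe$ for all $u\in\ctrl(x)$ then makes $\safe$ forward invariant, and the induction runs over time along a single trajectory. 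Your version buys a slightly cleaner endgame: the paper's final sentence ("because the limit set exists, \dots is the set of points that are safe along $[k+D,\infty)$") quietly interchanges the limit over $i$ with the infinite-horizon property, whereas your fixed-point-invariance argument gives the infinite-horizon conclusion in one step with no limit interchange to justify. Both are valid under \Cref{assumfinite}; the content is the same.
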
 
\begin{proof}
The Tarski fixed point theorem \cite{tarski1955lattice} ensures that the limit on the right hand side exists and is unique if $\state$ is a finite set and $\spre_\inv$ is a monotone operator. \Cref{assumfinite} ensures that $\state$ is finite, and monotonicity of $\spre_\inv$ with respect to the set containment ordering can easily be verified. Note that $\inv = \spre_\inv^1(\state)$. Membership of state $x[k]$ in set $\spre\inv^{i+1}(\state)$ ensures that both $x[k], x[k+1] \in \safe$. By induction, given $x[k+D] \in \spre_\inv^i(\state)$ and $i > 0$, trajectories from system $f \circ \ctrl$ will remain in $\inv$ along the interval $[k + D, k + D + i)$ . Because the limit set exists, $\lim_{i \rightarrow \infty}  \spre_\inv^i(\state)$ is the set of points that are safe along the interval $[k + D, \infty)$. 
\end{proof}

Building on the previous lemma, iterating $\sipre$ $D$ times yields a region where all trajectories of length $D$ are safe without coordination.
The closed loop system under $\indep_\ctrl$ must never exit $\inv$ within the interval $[k, k+D)$, and also must terminate at $x[k+D] \in \safe$ so that the system under $\ctrl$ can ensure safety along the infinite horizon $[k+D,\infty)$. 

\begin{proposition} \label{propcoordfp}
Let $\safe := \lim_{i \rightarrow \infty} \scp_{\inv}^i(\safe)$. Then $\sipre_\inv^k (\safe)$ is the set of states that are safe under $\indep_\ctrl$ for $k-1$ time steps.
\end{proposition}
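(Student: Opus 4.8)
The plan is to prove the characterization by induction on $k$, treating $\sipre_\inv^k(\safe)$ as a backward safety computation whose iterates form a nested chain. Writing $W_k := \sipre_\inv^k(\safe)$, I would first record the structural facts that make the induction run. Since $\scp_\inv = \spre_\inv$ and $\safe$ is already a fixed point of that operator, the "definition" $\safe := \lim_{i\to\infty}\scp_\inv^i(\safe)$ returns the very same invariance set produced in the previous lemma, so no new object is introduced: $\spre_\inv(\safe) = \safe$, and hence any trajectory of $f \circ \ctrl$ that enters $\safe$ stays in $\inv$ over the whole infinite horizon. I would also note that $\state$ is finite (\Cref{assumfinite}), that $\sipre_\inv$ is monotone with respect to set containment, and that $\sipre_\inv(Z) \subseteq Z$ for every $Z$ because of the leading $Z \cap$ in \Cref{eqnsipre}; consequently $W_0 = \safe \supseteq W_1 \supseteq W_2 \supseteq \cdots$ is a decreasing chain contained in $\safe$.

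\emph{Base case.} The seed $W_0 = \safe$ carries no uncoordinated steps, and by the previous lemma a state lying in $\safe$ can be handed to the coordinated controller $\ctrl$ and kept in $\inv$ forever. This anchors the induction and isolates the role of $\safe$ as the target that, once reached, lets coordination take over. (Pinning down whether the base index is $k=0$ or $k=1$ is precisely where the $k-1$ offset in the statement is settled.)

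\emph{Inductive step.} I would unfold one application of the operator, $W_{k+1} = \sipre_\inv(W_k) = W_k \cap \cpi(W_k) \cap \inv$, and read off each conjunct. Membership in $\cpi(W_k)$ (\Cref{eqnuncoordsafe}) asserts two things: there is at least one admissible input at $x$ (the $\proj_\state(\indep_\ctrl)$ term rules out blocking), and for \emph{every} uncoordinated joint action $u \in \indep_\ctrl(x)$ the successor set is nonempty and contained in $W_k$. This universal quantifier over $u \in \indep_\ctrl(x)$ is the crux: it forces the next state into $W_k$ no matter which action each agent independently selects, which is exactly the robustness demanded by the closed loop $f \circ \indep_\ctrl$. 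Applying the induction hypothesis to that successor then extends the guaranteed-safe trajectory by one time step, while the explicit $\inv$ conjunct certifies safety of the current state. Threading this through shows that from any $x \in W_{k+1}$ every trajectory of $f \circ \indep_\ctrl$ remains safe for one additional step before reaching $\safe$, which is the claim at index $k+1$.

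I expect the main obstacle to be careful accounting rather than any deep step. Two points need attention. First, the quantifier in $\cpi$ must be threaded correctly so that the conclusion holds for \emph{all} uncoordinated action sequences, matching the "all trajectories" clause in the definition of robust safety, rather than for a single favorable choice. Second, the indexing is delicate: I would pin down exactly how many iterations $k$ correspond to how many uncoordinated steps in $\inv$ and at which time the trajectory is required to lie in $\safe$, and in doing so verify whether the leading $Z \cap$ term in \Cref{eqnsipre} confines the intermediate states merely to $\inv$ or, more strongly, to $\safe$ itself, since the nesting $W_{k+1} \subseteq W_k \subseteq \safe$ suggests the latter. Reconciling that nesting with the informal reading "remain in $\inv$, terminate in $\safe$" is the one place where I would slow down and resolve the off-by-one implicit in "safe $\ldots$ for $k-1$ time steps."
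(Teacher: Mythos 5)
Your proposal is correct and follows essentially the same route as the paper: an induction that peels off one application of $\sipre_\inv$ per uncoordinated time step, anchored at the base case $\safe$ via the preceding lemma, with the universal quantifier in $\cpi$ guaranteeing that every choice from $\indep_\ctrl(x)$ lands the successor in the previous iterate. If anything, your phrasing that the successor set is \emph{contained in} $\sipre_\inv^{k-1}(\safe)$ is more accurate than the paper's claim that it is ``uniquely defined as'' that set, and your observation that the nesting $W_{k+1}\subseteq W_k\subseteq\safe$ confines intermediate states to $\safe$ itself is a correct refinement the paper leaves implicit.
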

\begin{proof}
Suppose $x[0] \in \sipre_\safe^k (\safe)$. The set of possible states for $x[1]$ under controller $\indep_\ctrl$ is uniquely defined as $\sipre_\safe^{k-1} (\safe)$ and is non-empty. By induction, a sequence $x\sig = x[0]\ldots x[k]$ generated by closed loop system $f \circ \indep_\ctrl$ must satisfy $x[j] \in \sipre_\safe^{k-j} (\safe)$ for all $j \in [0,k]$. By definition $\sipre_\safe^{0} (\safe) = \safe$. 
\end{proof}

\subsection{Self-triggered coordination}

It is also possible to design a scheduler for triggering communication amongst agents. Each agent maintains a countdown for the latest time communications can be initiated. As the system executes, this time is updated to provide a constantly changing upper bound on the latest time the agents need to communicate. For clarity, we assume that the connection initialization delay as in the previous section is $D = 0$.

The fixed point computation in \Cref{propcoordfp} yields a sequence of disjoint sets. Define $T: [0, F] \maps \pow{\state}$ such that
\begin{align}
T(k) = 
\left\{ 
\begin{array}{ll}
\sipre_\inv^k (\safe) \setminus \sipre_\inv^{k+1}(\safe) & \text{ if } k < F\\
\sipre_\inv^k (\safe) & \text{ if } k = F
\end{array} \right.
\end{align} 
where $F \in \NN$ is the first value where the sequence reaches a fixed point
\begin{align}
F = \text{argmin}_{i \in \NN_{\geq 0}} \sipre_{\inv}^{i+1}(\safe) = \sipre_{\inv}^{i}(\safe). 
\end{align}

A modified inverse function $\hat{T}^{-1}: \state \maps [0,F]$ is given by:
\begin{align}
\hat{T}^{-1}(x) = \{i \in [1,F]: x \in T(i) \}. \label{eqtriggercountdown}
\end{align} 
Because the collection $T(1), \ldots, T(F)$ consists of disjoint sets, $\hat{T}^{-1}(x)$ is well defined (i.e. a singleton set) for each $x \in \safe$. 
Because each agent has access to $\hat{T}$ and the state $x$, they can independently determine the unique value for $i$ such that $x \in T(i)$. A countdown with initial value $i$ is then initialized for each agent. When that value reaches $i=0$ then the agents coordinate by selecting an action and also initialize a new countdown timer. 
This framework exhibits reduced communication overhead compared to a centralized architecture, while also preserving the guarantees that are otherwise impossible with a fully decentralized and coordination free controller architecture. 

The self-triggered system is defined by augmenting the original system with a countdown that resets after coordination has been triggered. 

\begin{definition} \label{defselftriggered}
The system with a self-triggered communication architecture satisfies the following dynamics. 
\begin{align}
x[k+1] &= \left\{\begin{array}{ll}
f \circ \indep_\ctrl(x[k]) & \text{ if } i[k] > 0\\
f \circ \ctrl(x[k]) & \text{ if } i[k] = 0
\end{array}\right. \label{eqnselftrigstatetrans}\\
i[k+1] &= \left\{ \begin{array}{ll}
i[k] - 1 & \text{ if } i[k] > 0 \\
\hat{T}^{-1}(x[k+1]) & \text{ if } i[k] = 0 \\
\end{array}\right. 
\end{align} 
Note that when $i[k] = 0$, the counter is reset to $\hat{T}^{-1}(x[k+1])$ after the state transition from \Cref{eqnselftrigstatetrans} occurs.
\end{definition}

\begin{proposition}
If $x[k] \in \safe$, then all trajectories $x[k,\infty)$ under the self-triggered communication system from \Cref{defselftriggered} will remain inside $\inv$. 
\end{proposition}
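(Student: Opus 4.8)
The plan is to prove safety by exhibiting an inductive invariant on the augmented state $(x[k], i[k])$ of the self-triggered system of \Cref{defselftriggered}, namely the claim that $x[k'] \in \sipre_\inv^{\,i[k']}(\safe)$ holds at every time step $k' \geq k$. This single invariant suffices: every iterate $\sipre_\inv^{\,j}(\safe)$ is contained in $\inv$ because the operator in \Cref{eqnsipre} intersects with $\inv$ at each application, while the base iterate $\sipre_\inv^{\,0}(\safe) = \safe$ lies in $\inv$ by construction. Hence $x[k'] \in \sipre_\inv^{\,i[k']}(\safe) \subseteq \inv$ for all $k' \geq k$, which is exactly the statement. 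The whole argument therefore reduces to establishing the invariant at the initial time and showing it is preserved by each of the two branches of the dynamics.

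For the base case I would take the counter to be initialized consistently, $i[k] = \hat{T}^{-1}(x[k])$; since the sets $T(1), \ldots, T(F)$ are disjoint and satisfy $T(j) \subseteq \sipre_\inv^{\,j}(\safe)$, the membership $x[k] \in T(\hat{T}^{-1}(x[k]))$ yields $x[k] \in \sipre_\inv^{\,i[k]}(\safe)$. For the inductive step I would split on the guard. When $i[k] > 0$ the state evolves under $f \circ \indep_\ctrl$ and the counter decrements, so I invoke the propagation property already used in the proof of \Cref{propcoordfp}: every successor under $f \circ \indep_\ctrl$ of a state in $\sipre_\inv^{\,i[k]}(\safe)$ lies in $\sipre_\inv^{\,i[k]-1}(\safe)$, matching $i[k+1] = i[k]-1$. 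When $i[k] = 0$ the state evolves under the coordinated controller $\ctrl$ from $x[k] \in \safe = \sipre_\inv^{\,0}(\safe)$; because $\safe$ is a fixed point of $\spre_\inv$ it satisfies $\safe \subseteq \pre(\safe)$, hence $f \circ \ctrl(x[k]) \subseteq \safe$, and the reset $i[k+1] = \hat{T}^{-1}(x[k+1])$ together with $T(i[k+1]) \subseteq \sipre_\inv^{\,i[k+1]}(\safe)$ re-establishes the invariant for $x[k+1]$. In both branches the $\proj_\state$ membership clause inside $\cpi$ and $\pre$ guarantees the active controller is non-blocking, so the referenced trajectories indeed exist.

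The step I expect to be the main obstacle is the reset branch, specifically making $\hat{T}^{-1}$ behave correctly on all of $\safe$. The telescoping union $\bigcup_{i=1}^{F} T(i)$ equals $\sipre_\inv^{\,1}(\safe)$ rather than $\safe$ itself, so a coordinated successor that lands in $\safe \setminus \sipre_\inv^{\,1}(\safe)$ is assigned no level in $[1,F]$. I would close this gap using the convention $\hat{T}^{-1}(x) = 0$ for such states, which is consistent with the stated codomain $[0,F]$ and forces immediate re-coordination on the following step; because $\safe$ is invariant under $\ctrl$, repeatedly coordinating never exits $\safe \subseteq \inv$, so the invariant, and thus safety, survives this corner case. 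The remaining care is purely bookkeeping: confirming that $\safe$ is genuinely \emph{robustly} invariant under the nondeterministic $\ctrl$ (all successors, not merely one), which follows from reading the second clause of $\pre$ at the fixed point $\safe = \spre_\inv(\safe)$.
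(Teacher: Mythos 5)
The paper states this proposition without any proof, so there is nothing to compare against line by line; on its own merits your argument is correct and is the natural completion of the paper's development. Your inductive invariant $x[k'] \in \sipre_\inv^{\,i[k']}(\safe)$ is exactly the right strengthening: the decrement branch is handled by the same propagation fact used in the proof of \Cref{propcoordfp} (membership in $\sipre_\inv^{\,i}(\safe) \subseteq \cpi(\sipre_\inv^{\,i-1}(\safe))$ forces every $f \circ \indep_\ctrl$ successor into the next lower level), the reset branch is handled by the fixed-point property $\safe = \spre_\inv(\safe) \subseteq \pre(\safe)$ from the Lemma, and containment in $\inv$ follows because every iterate of $\sipre_\inv$ (and $\safe$ itself) is intersected with $\inv$. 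You also correctly flagged the one genuine defect in the paper's construction: $\hat{T}^{-1}$ as written in \Cref{eqtriggercountdown} is empty on $T(0) = \safe \setminus \sipre_\inv^{\,1}(\safe)$, and your convention $\hat{T}^{-1}(x) = 0$ there (forcing re-coordination on the next step, which is harmless since $\safe$ is robustly invariant under $\ctrl$) is the repair the stated codomain $[0,F]$ implicitly intends. No gaps remain.
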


\section{Examples}

In each of our examples, we use a modified version of the \text{SCOTS} symbolic controller synthesis toolbox \cite{rungger2016scots}, which takes a continuous control system and creates a finite state machine that serves as an abstract representation over which a controller is synthesized. In addition to modifications to compute \Cref{defmeshindependentC} and \Cref{eqnsipre}, we exploit internal system dependencies to reduce the computation time of the abstraction \cite{gruber2017sparsity}.
Creating the discrete abstraction depends on parameters such as the grid size and granularity.
Consider a set $\PP = \prod_{i=1}^N \PP_i$ and a discretization parameter $\disc \in \mathbb{R}_{> 0}^N$. Its corresponding discretization grid is $[\PP]_\disc := \prod_{i=1}^N [\PP_i]_{\disc_i}$  where $[\PP_i]_{\disc_i} := \{a \in \PP_i: a = k \disc_i \text{ with } k \in \mathbb{Z} \}$ is a grid over a single dimension. A full introduction to the underlying theory appears in \cite{tabuada2009verification} and is beyond the scope of this paper.

\subsection{Invariance in a Circle} \label{subseccircle}

\begin{figure}
\centering
\begin{tikzpicture}
\begin{scope}
\node at (0,0) {\includegraphics[width = .42\columnwidth]{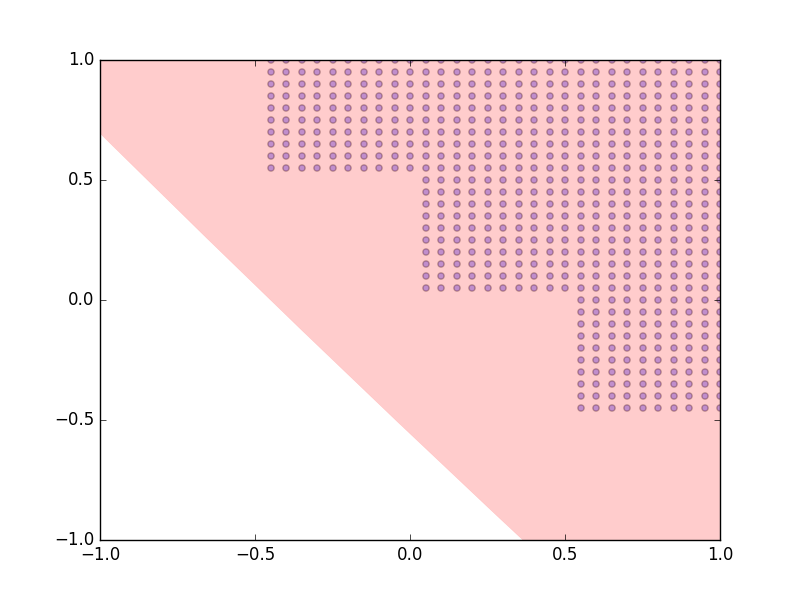}};
\node at (0,2.2) {{$\control = \control_x \times \control_y$}};
\node at (0,-2.3) {$\control_x$};
\node at (-3.4,0) {$\control_y$};
%\node[fill =  black, circle] at (-1, -.8) {$ $};
\draw[dashed] (-1.02, -.86) rectangle (2.52,1.9);
\end{scope} 
\end{tikzpicture}
\caption{Individual dots represent the synthesized safe control set from SCOTS under $\ctrl(x)$ at point $x= (x_1, x_2) = (-.62, -.5)$. Without discretization, the true safe action space would be the shaded region in red. The dashed box shows the possible coordination-free actions $\indep_\ctrl(x)$, which is not contained in the safe action space. Importantly, the synthesized safe inputs are a subset of the true set. Note that $||x||_2 \approx .796$, which is near the boundary of $\inv$.} \label{figsafecontrols}
\end{figure}

Two agents each have control over different axes and both need to remain within a circular region. 
\begin{align}
\begin{array}{ll}
\dot{x}_1 &= u_1 \\
\dot{x}_2 &= u_2
\end{array} \label{eqn2ddynamics}
\end{align}

Let $\state = \control = [-1,1] \times [-1,1]$. Although the dynamics are independent, the safety region is a circle with a radius $0.8$ so $\inv = \{(x_1,x_2): x_1^2 + x_2^2 \leq .64\}$ so both agents must coordinate with one another to avoid exiting $\inv$ near the boundary. 
It is clear that the system can always enforce safety within $\inv$ simply by picking a control input $(u_1, u_2) := -(x_1,x_2)$. 

A discretization of the system dynamics is constructed with a sampling period of $t = .01$. The state space grid $[\state]_\disc$ is constructed with $\disc = [.01,.01]$ and input space grid is $[\control]_\epsilon$ with $\epsilon = [.05, .05]$. 
\Cref{figsafecontrols} depicts all safe control inputs at $(x_1, x_2) = (-.62, .5)$ which is near the boundary of $\inv$. 
The staircase shape of the boundary between the safe and unsafe inputs is due to the discretization of the dynamics. 
Inputs towards the upper right move the state to the interior of $\inv$, while safe inputs at the lower left hug the boundary between $\inv$ and $\inv^C$. 
If both systems jointly pick low values for $u_1$ and $u_2$ then a violation occurs, however both agents can pick $u_1,u_2 = -1$ if the other agent concedes and chooses a higher value. 

\begin{figure}
\centering
\includegraphics[width = .49\columnwidth]{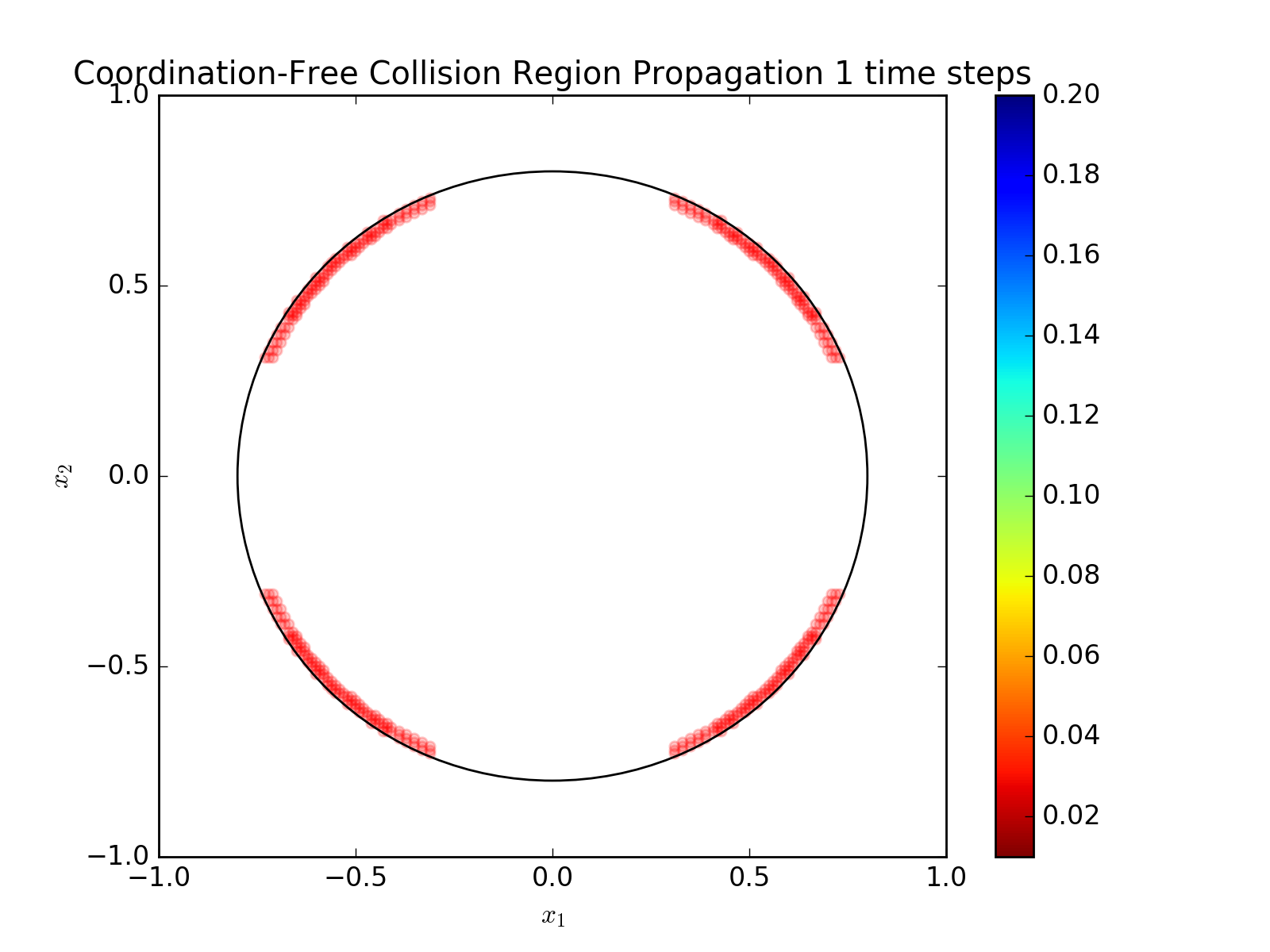}
\includegraphics[width = .49\columnwidth]{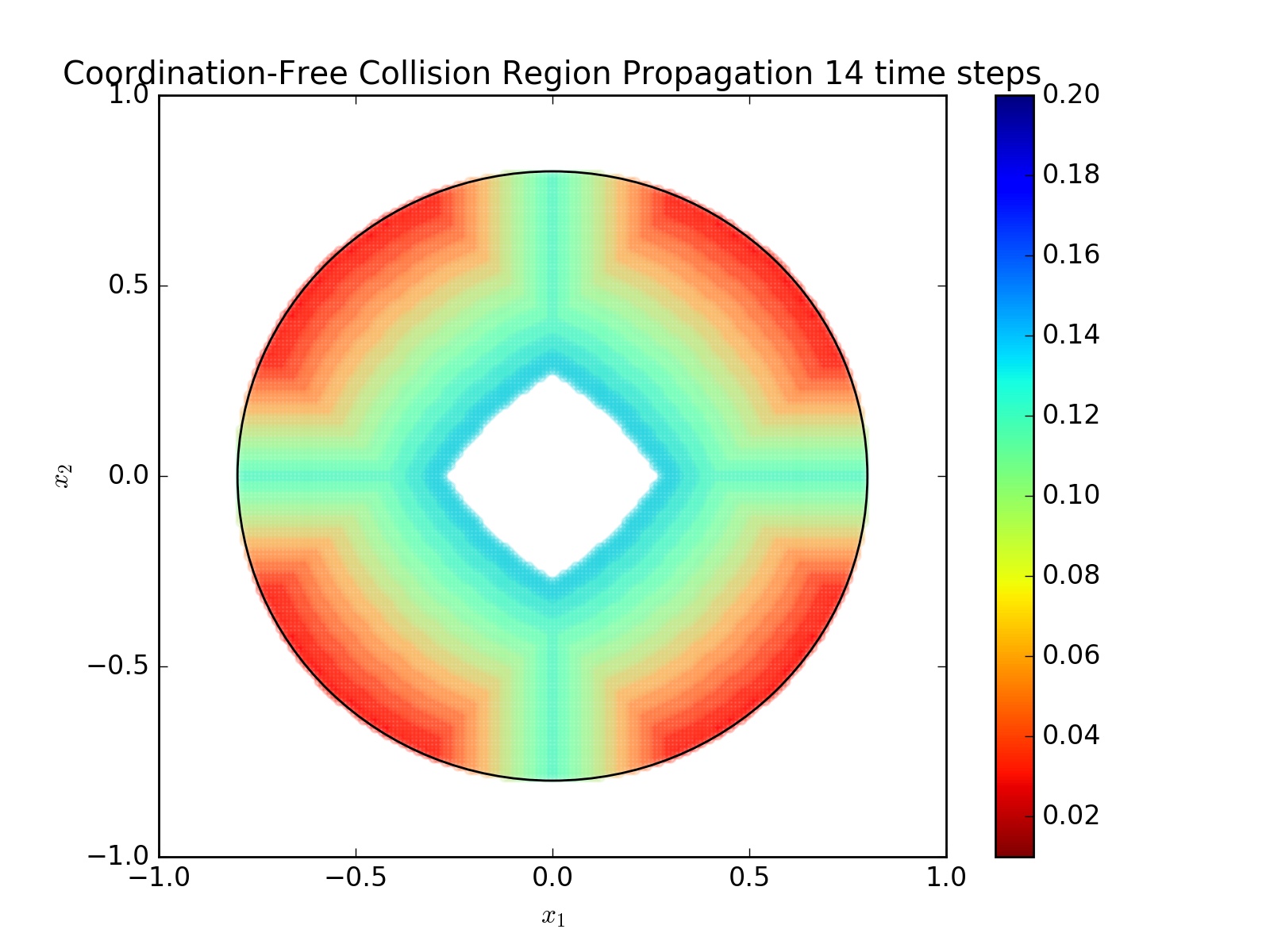}
\includegraphics[width = .49\columnwidth]{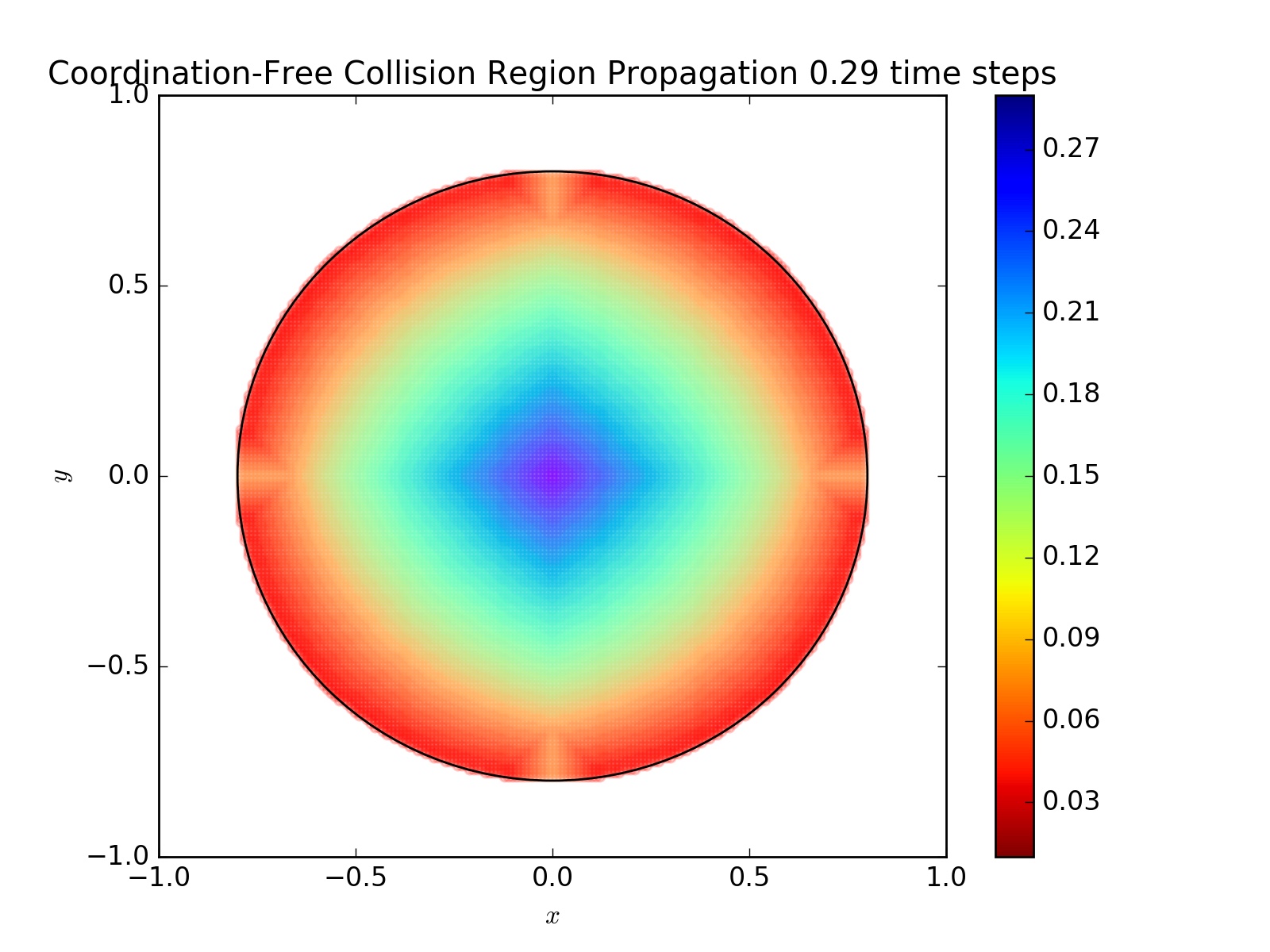}
\caption{Multiple snapshots at $i = 1, 14, 29$ as the region $\safe \setminus \scp_{\inv}^i(\safe)$ grows. One can alternatively visualize $\scp_{\inv}^i(\safe)$ as a shrinking interior white region as the length of the communication-free interval grows. Red regions represent areas where the system will imminently exit $\inv$ unless the two agents coordinate their actions, while blue regions in the interior are only unsafe if the agents do not coordinate for a prolonged period. A fixed point was reached at $i=29$.} \label{figcirclecoord} 
\end{figure} 
\Cref{figcirclecoord} depicts the propagation at various time steps of the coordination-free region via the $\sipre$ operator in \Cref{seccoordregion}.
\Cref{figcirclecoord} shows that a system beginning at the origin can experience an uncoordinated collision is possible after $29$ discrete time steps which under sampling period $t = .29$ corresponds to an interval of length .29 in continuous time. 
However for the continuous system the worst case time step is roughly twice as much $.8/\sqrt{2} \approx .565$, which is the case when $u_1, u_2 \in \{-1, 1\}$ and maintain constant values over time.
This is mainly due to the discretization errors that arise when abstracting the continuous system to a discrete one.
Note that the discretization error does not jeopardize the safety guarantee. Rather, the discrete case underestimates how much time is available for agents to avoid communication, thus providing a more conservative guarantee. 

\subsection{Intersection Collision Avoidance}
Consider two vehicles that are approaching an intersection with no stop sign or a traffic signal. They are controlled independently but each are equipped with V2V radios and may communicate with one another. They also are equipped with enough sensors to identify the position and velocity of all vehicles near the intersection. We consider a simple set of system dynamics given by
\begin{align}
\dot{p}_i &= v_i\\
\dot{v}_i &= u_i - Kv^2 \label{eqnfwdvel}
\end{align}
with some constant $K = .2$. 
A higher value for $k$ signifies higher air drag.
Let $\mathcal{P}_1, \mathcal{P}_2 = [-10,10]$ and $\mathcal{V}_1, \mathcal{V}_2 = [0,3]$. The state space is $\state := \prod_{i=1}^2 \left( \mathcal{P}_i \times \mathcal{V}_i \right)$ and $\control := \prod_{i=1}^2 [-1,1]$. The invariant region is the region where at least one vehicle is outside the intersection and no collision has occurred and is succinctly encoded as the set
\begin{align}
\inv := \{x: (|p_1| \geq 2) \vee (|p_2| \geq 2)\}. 
\end{align}

We use the SCOTS toolbox to synthesize a supervisory controller $C$ and compute its corresponding invariance  region $\safe$ with the procedure in \Cref{ssmaxdelay}. The system dynamics discretization used a sampling period of $t = .2$, state space grid $[\state]_\disc$ parameter $\disc = [.1,.1,.1,.1]$ and input space grid $[\control]_\epsilon$ parameter $\epsilon = [.1, .1]$. 

After synthesizing controller $C$, its  decomposed counterpart $\indep_\ctrl$ is analyzed. 
Within $\safe^C$ even a centralized controller is unable to guarantee that a collision will \textit{not} occur. This unsafe region is to be avoided and communication is necessary to avoid it. \Cref{figinterunsafe} depicts the 3D projection of $\safe^C$ and the evolution of the unsafe region $(\sipre_\inv^D (\safe))^C$ with no communication. 

\begin{figure}  \label{figinterunsafe}
\centering
\includegraphics[width = .5 \columnwidth]{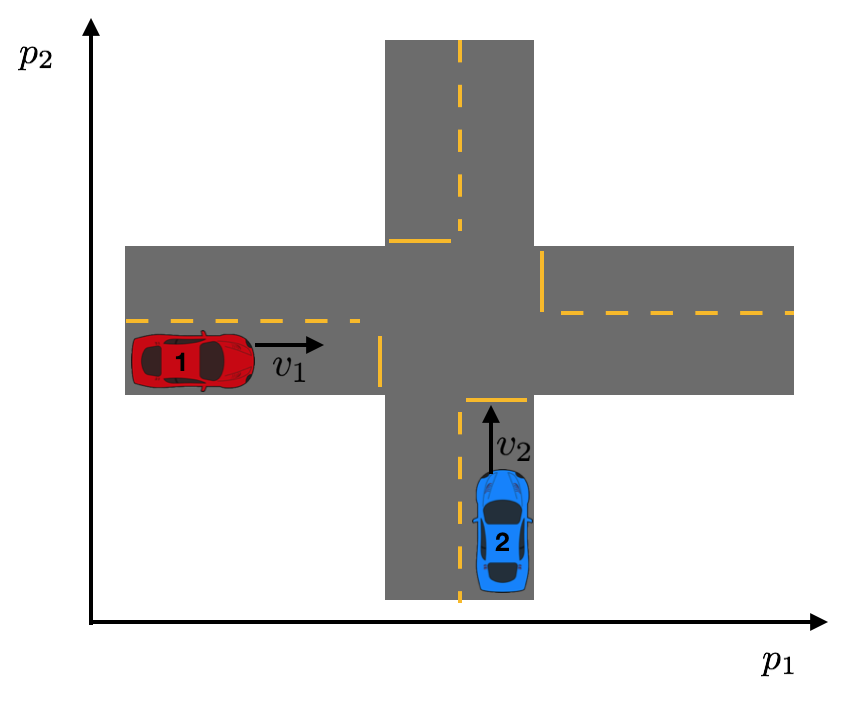}
\caption{Intersection Collision Avoidance}
\end{figure}
\begin{figure}
\includegraphics[width = .5\columnwidth]{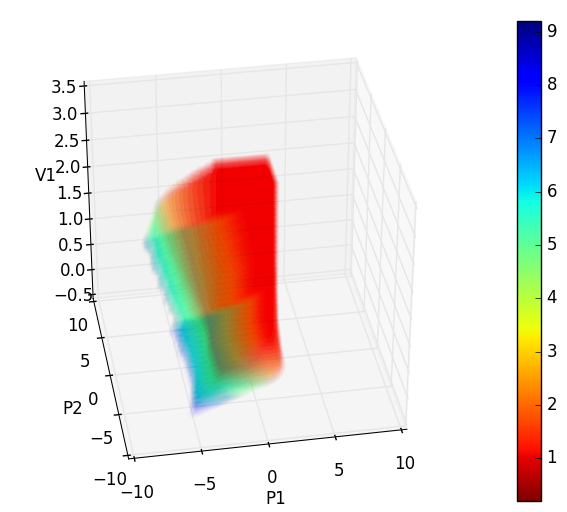}
\includegraphics[width = .5\columnwidth]{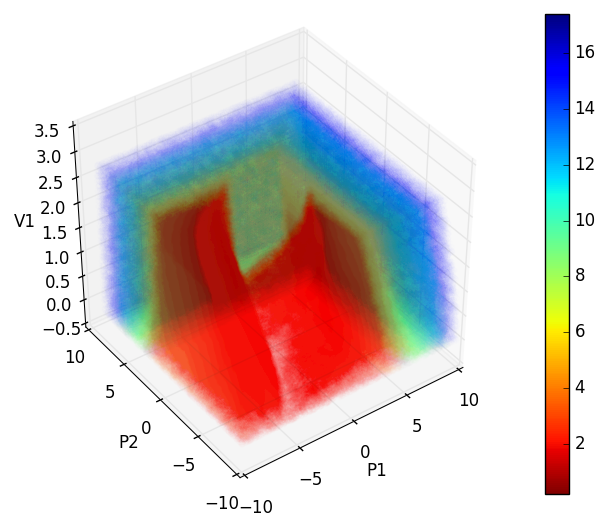}
\caption{(Left) Three dimensional projection of the four dimensional unsafe region $\safe^C$ for centralized controller $\ctrl$ with $v_2 = 2.8$ held constant. Color scale shows the earliest potential collision time. (Right) Figure shows the unsafe action region $(\sipre_\inv^D (\safe))^C$ for the system $f \circ \indep_\ctrl(x)$ expand as communication delay $D$ increases.}
\end{figure}

\subsection{Self-Triggered Coordination in a 2D Gridworld}

Let there be $N = 2$ agents navigating a 2D grid. Both agents have identical dynamics to \Cref{eqn2ddynamics} as shown below with superscripts $i = 1,2$ as indexes for each agent.
\begin{align}
\begin{array}{ll}
\dot{x}_1^i &= u_1^i \\
\dot{x}_2^i &= u_2^i
\end{array} \label{eqn2ddynamics}
\end{align}
The sets $\state^i = [-.2,.2] \times [-.2, .2]$ and $\control^i = [-1,1] \times [-1,1]$ for both $i=1,2$. 
A collision has occurred between both agents in the region
\begin{align}
\inv^C = \{(x^1,x^2) \in \state^1 \times \state^2: \max(|x_1^1 - x_1^2|, |x_2^1 - x_2^2|) < 0.1 \}.
\end{align}

SCOTS is again used to synthesize a centralized controller for the system. The discrete abstraction was constructed with sampling period $\tau = .01$, state space grid $[\state]_\eta$ with parameter $\eta=[.01, .01, .01, .01]$, and input space grid $[\control]_\epsilon$ with parameter $\epsilon = [.2,.2,.2,.2]$.
\Cref{figselftriggered} shows the trajectory of the system with the self-triggering implementation and how $\hat{T}^{-1}(x[k])$ as defined in \Cref{eqtriggercountdown} varies with respect to time. 
 
\begin{figure}
\centering
\includegraphics[height = .34\columnwidth]{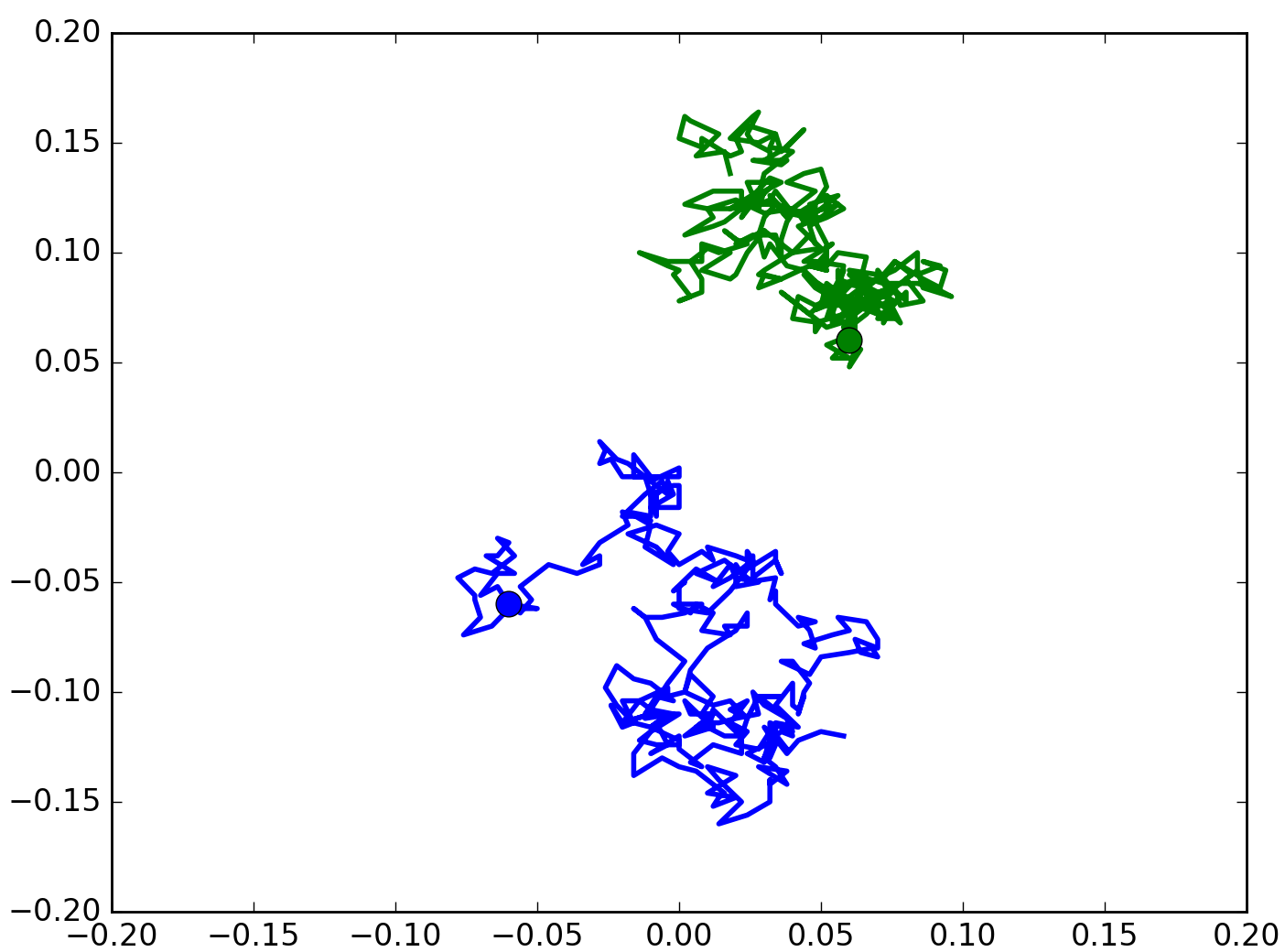}
\includegraphics[height = .34\columnwidth]{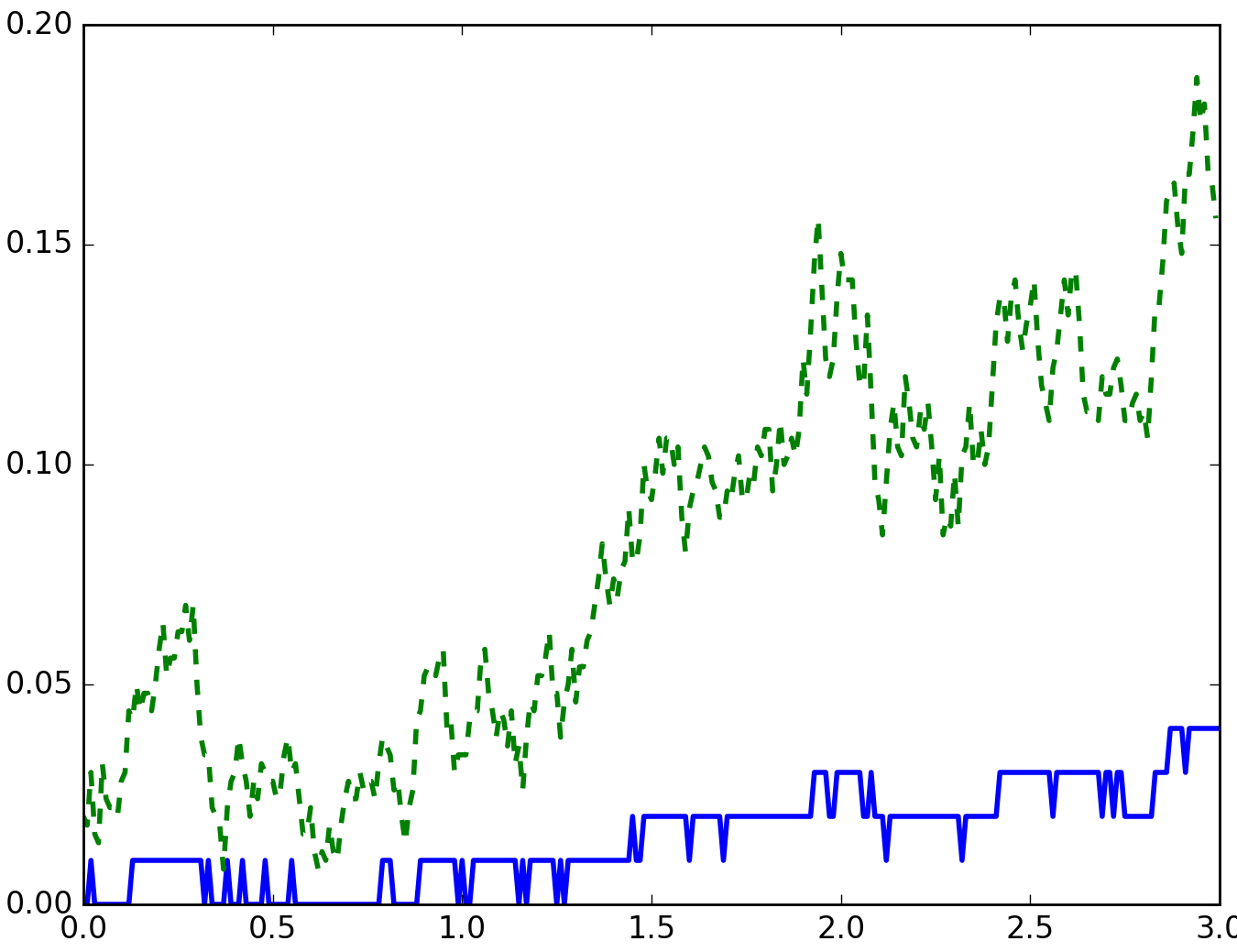}
\caption{ (Left) Trajectories of both systems (Right) The solid line is the value for $\hat{T}^{-1}(x[k])$ which underapproximates the actual time to when a collision is inevitable $\safe^C$. Because \Cref{eqn2ddynamics} is fully actuated, the safe set $\safe$ and the invariance region $\inv$ are identical.} \label{figselftriggered}
\end{figure}

\section{Conclusion}

We have presented a method to analyze when communication is necessary in order for a distributed controller to satisfy a safety requirement. While the current implementation deals with memoryless controllers future work will look into control policies with memory, time varying connectivity, and an application to richer specifications including those expressible in temporal logic.

\nocite{*}
\bibliographystyle{eptcs}
\bibliography{refs} 

\end{document}